
\documentclass[fleqn]{llncs}
\usepackage{version}
\usepackage{dlds}

\pagestyle{plain}
\raggedbottom


\title{Data Linkage Dynamics with Shedding%
       \thanks{This research has been partly carried out in the
               framework of the Jacquard-project Symbiosis, which is
               funded by the Netherlands Organisation for Scientific
               Research (NWO).}}
\author{J.A. Bergstra \and C.A. Middelburg}
\institute{Informatics Institute, University of Amsterdam \\
           Science Park~107, 1098~XG Amsterdam, the Netherlands \\
           \email{J.A.Bergstra@uva.nl,C.A.Middelburg@uva.nl}}

\begin{document}

\maketitle

\begin{abstract}
We study shedding in the setting of data linkage dynamics, a simple
model of computation that bears on the use of dynamic data structures
in programming.
Shedding is complementary to garbage collection.
With shedding, each time a link to a data object is updated by a
program, it is determined whether or not the link will possibly be used
once again by the program, and if not the link is automatically removed.
Thus, everything is made garbage as soon as it can be viewed as garbage.
By that, the effectiveness of garbage collection becomes maximal.
\begin{keywords}
data linkage dynamics, shedding, forecasting service.
\end{keywords}
\begin{classcode}
D.3.3, D.4.2, F.1.1, F.3.3.
\end{classcode}
\end{abstract}

\section{Introduction}
\label{sect-intro}

This paper is a sequel to~\cite{BM08d}.
In that paper, we presented an algebra, called data linkage algebra, of
which the elements are intended for modelling the states of computations
in which dynamic data structures are involved.
We also presented a simple model of computation, called data linkage
dynamics, in which states of computations are modelled as elements of
data linkage algebra and state changes take place by means of certain
actions.
Data linkage dynamics includes the following features to reclaim
garbage: full garbage collection, restricted garbage collection (as if
reference counts are used), safe disposal of potential garbage, and
unsafe disposal of potential garbage.

In the current paper, we add shedding to the features of data linkage
dynamics.
This feature is complementary to the garbage collection features of data
linkage dynamics.
Roughly speaking, shedding works as follows: each time a link to a data
object is updated by a program, it is determined whether or not the link
will possibly be used once again by the program, and if not the link is
automatically removed.
In this way, everything is made garbage as soon as it can be taken for
garbage.
The point of shedding is that by this, the effectiveness of garbage
collection becomes maximal.

In the sixties of the previous century, when the first list-processing
languages came up, three basic garbage collection techniques have been
proposed: reference counting (see e.g.~\cite{GHG60a,Col60a}), marking
(see e.g.~\cite{McC60a,SW67a}), and copying (see
e.g.~\cite{Min63a,FY69a}).
The garbage collection techniques that have been proposed in the
seventies and eighties of the previous century are mainly incremental
and parallel variants of the three basic techniques (see
e.g.~\cite{Bak78a,LH83a} and~\cite{Ste75a,KS77a,DLMSS78a},
respectively), which are intended to avoid substantial interruption due
to garbage collection, and conservative and tag-free variants of the
three basic techniques (see e.g.~\cite{BW88a} and~\cite{App89,Gol91a},
respectively), which are intended to perform garbage collection more
efficient.
All the garbage collection techniques proposed in those times collect
only data objects that are no longer reachable by a series of links.
In the next paragraph, we will use the term ``standard garbage
collection techniques'' to refer to this group of garbage collection
techniques.

Owing to the growing use of dynamic data structures in programming, the
effectiveness of garbage collection techniques becomes increasingly more
important since the nineties of the previous century.
It has been confirmed by recent empirical studies that standard garbage
collection techniques actually leave a lot of garbage uncollected (see
e.g.~\cite{SKS00a,SKS01a,HDH02a}).
For the greater part, recently proposed garbage collection techniques
that are intended to be more effective than standard garbage collection
techniques turn out to make use of approximations of shedding.
The approximations are obtained by means of information about future
uses of links coming from static program analysis.
The information is either directly provided to an adapted standard
garbage collector (see e.g.~\cite{ADM98a}) or used to transform the
program in question such that data objects become unreachable as soon as
some safety property holds according to the information (see
e.g.~\cite{SYKS05a,KSK07a}).
In the latter case, the safety property used differs from one proposal
to another, can in all cases be improved by taking into account that the
number of data objects that can exist at the same time is bounded, and
is in all cases at best weakly justified by a precise semantics of the
programming language supposed to be used.

Our study of shedding arises from the work on ``nullifying dead links''
presented in~\cite{KSK07a}.
That work concerns the removal of links that will not possibly be used
once again by means of static program analysis and program
transformation.
In our study of shedding, different from the study in~\cite{KSK07a},
the semantic effects of the fact that the number of data objects that
can exist at the same time is always bounded are taken into account.

The view is taken that the behaviours exhibited by programs on execution
are threads as considered in basic thread algebra.%
\footnote
{In~\cite{BL02a}, basic thread algebra is introduced under the name
 basic polarized process algebra.
 Prompted by the development of thread algebra~\cite{BM04c}, which is a
 design on top of it, basic polarized process algebra has been renamed
 to basic thread algebra.
}
A thread proceeds by performing actions in a sequential fashion.
A thread may perform an action for the purpose of interacting with a
service that takes the action as a command to be processed.
The processing of the action results in a state change and a reply.
In the setting of basic thread algebra, the use mechanism has been
introduced to allow for this kind of interaction.
The state changes and replies that result from performing the actions
of data linkage dynamics can be achieved by means of a service.

In~\cite{BM08d}, it was explained how basic thread algebra can be
combined with data linkage dynamics by means of the use mechanism in
such a way that the whole can be used for studying issues concerning the
use of dynamic data structures in programming.
For a clear apprehension of data linkage dynamics as presented
in that paper, such a combination is not needed.
This is different for shedding, because it cannot be explained without
reference to program behaviours.
In the current paper, we adapt the data linkage dynamics services
involved in the combination described in~\cite{BM08d} to explain
shedding.
For the adapted data linkage dynamics services, shedding happens to be a
matter close to reflection on themselves.
Moreover, the adapted data linkage dynamics services are services of
which the state changes and replies may depend on how the thread that
performs the actions being processed will proceed.
That is why we also introduce a generalization of the use mechanism to
such forecasting services.

This paper is organized as follows.
First, we review data linkage algebra, data linkage dynamics and basic
thread algebra
(Sections~\ref{sect-DLA}, \ref{sect-DLD}, and~\ref{sect-BTA}).
Next, we present the use mechanism for forecasting services and explain
how basic thread algebra can be combined with data linkage dynamics by
means of that use mechanism
(Sections~\ref{sect-TSI} and~\ref{sect-comb-TA-DLD}).
After that, we introduce the shedding feature and adapt the data linkage
dynamics services involved in the combination described before such that
they support shedding
(Sections~\ref{sect-shedding}, \ref{sect-DLDSm}, and~\ref{sect-DLDSs}).
Then, we illustrate shedding by means of some examples
(Section~\ref{sect-examples}).
Finally, we make some concluding remarks (Section~\ref{sect-concl}).

\section{Data Linkage Algebra}
\label{sect-DLA}

In this section, we review the algebraic theory \DLA\ (Data Linkage
Algebra).
The elements of the initial algebra of \DLA\ can serve for the states of
computations in which dynamic data structures are involved.

In \DLA, it is assumed that a fixed but arbitrary finite set $\Spot$ of
\emph{spots}, a fixed but arbitrary finite set $\Field$ of
\emph{fields}, a fixed but arbitrary finite set $\AtObj$ of \emph{atomic
objects}, and a fixed but arbitrary finite set $\Value$ of \emph{values}
have been given.

\DLA\ has one sort: the sort $\DaLi$ of \emph{data linkages}.
To build terms of sort $\DaLi$, \BTA\ has the following constants and
operators:
\begin{itemize}
\item
for each $s \in \Spot$ and $a \in \AtObj$,
the \emph{spot link} constant $\const{\slink{s}{a}}{\DaLi}$;
\item
for each $a \in \AtObj$ and $f \in \Field$,
the \emph{partial field link} constant $\const{\pflink{a}{f}}{\DaLi}$;
\item
for each $a,b \in \AtObj$ and $f \in \Field$,
the \emph{field link} constant $\const{\flink{a}{f}{b}}{\DaLi}$;
\item
for each $a \in \AtObj$ and $n \in \Value$,
the \emph{value association} constant $\const{\valass{a}{n}}{\DaLi}$;
\item
the \emph{empty data linkage} constant $\const{\emptydl}{\DaLi}$;
\item
the binary \emph{data linkage combination} operator
$\funct{\dlcom}{\DaLi \x \DaLi}{\DaLi}$;
\item
the binary \emph{data linkage overriding combination} operator
$\funct{\dlori}{\DaLi \x \DaLi}{\DaLi}$.
\end{itemize}
Terms of sort $\DaLi$ are built as usual.
Throughout the paper, we assume that there are infinitely many variables
of sort $\DaLi$, including $X$, $Y$, $Z$.
We use infix notation for data linkage combination and data linkage
overriding combination.

Let $L$ and $L'$ be closed \DLA\ terms.
Then the constants and operators of \DLA\ can be explained as follows:
\begin{itemize}
\item
$\slink{s}{a}$ is the atomic data linkage that consists of a link via
spot $s$ to atomic object $a$;
\item
$\pflink{a}{f}$ is the atomic data linkage that consists of a partial
link from atomic object $a$ via field $f$;
\item
$\flink{a}{f}{b}$ is the atomic data linkage that consists of a link
from atomic object~$a$ via field $f$ to atomic object $b$;
\item
$\valass{a}{n}$ is the atomic data linkage that consists of an
association of the value~$n$ with atomic object $a$;
\item
$\emptydl$ is the data linkage that does not contain any atomic data
linkage;
\item
$L \dlcom L'$ is the union of the data linkages $L$ and $L'$;
\item
$L \dlori L'$ differs from $L \dlcom L'$ as follows:
\begin{itemize}
\item
if $L$ contains spot links via spot $s$ and $L'$ contains spot links via
spot $s$, then the former links are overridden by the latter ones;
\item
if $L$ contains partial field links and/or field links from atomic
object $a$ via field $f$ and $L'$ contains partial field links and/or
field links from atomic object $a$ via field $f$, then the former
partial field links and/or field links are overridden by the latter
ones;
\item
if $L$ contains value associations with atomic object $a$ and $L'$
contains value associations with atomic object $a$, then the former
value associations are overridden by the latter ones.
\end{itemize}
\end{itemize}

The axioms of \DLA\ are given in Table~\ref{axioms-DLA}.%
\begin{table}[!t]
\caption{Axioms of \DLA}
\label{axioms-DLA}
\begin{eqntbl}
\begin{seqncol}
X \dlcom Y = Y \dlcom X \\
X \dlcom (Y  \dlcom Z) = (X \dlcom Y)  \dlcom Z \\
X \dlcom X = X \\
X \dlcom \emptydl = X
\eqnsep
\emptydl \dlori X = X \\
X \dlori \emptydl = X \\
X \dlori (Y \dlcom Z) = (X \dlori Y) \dlcom (X \dlori Z) \\
(X \dlcom \slinkp{s}{a}) \dlori \slinkp{s}{b} =
X \dlori \slinkp{s}{b} \\
(X \dlcom \pflinkp{a}{f}) \dlori \pflinkp{a}{f} =
X \dlori \pflinkp{a}{f} \\
(X \dlcom \flinkp{a}{f}{b}) \dlori \pflinkp{a}{f} =
X \dlori \pflinkp{a}{f} \\
(X \dlcom \pflinkp{a}{f}) \dlori \flinkp{a}{f}{b} =
X \dlori \flinkp{a}{f}{b} \\
(X \dlcom \flinkp{a}{f}{b}) \dlori \flinkp{a}{f}{c} =
X \dlori \flinkp{a}{f}{c} \\
(X \dlcom \valass{a}{n}) \dlori \valass{a}{m} =
X \dlori \valass{a}{m} \\
(X \dlcom \slinkp{s}{a}) \dlori \slinkp{t}{b} =
(X \dlori \slinkp{t}{b}) \dlcom \slinkp{s}{a}
 & \mif s \neq t \\
(X \dlcom \pflinkp{a}{f}) \dlori \slinkp{s}{b} =
(X \dlori \slinkp{s}{b}) \dlcom \pflinkp{a}{f} \\
(X \dlcom \flinkp{a}{f}{b}) \dlori \slinkp{s}{c} =
(X \dlori \slinkp{s}{c}) \dlcom \flinkp{a}{f}{b} \\
(X \dlcom \valass{a}{n}) \dlori \slinkp{s}{b} =
(X \dlori \slinkp{s}{b}) \dlcom \valass{a}{n} \\
(X \dlcom \slinkp{s}{a}) \dlori \pflinkp{b}{f} =
(X \dlori \pflinkp{b}{f}) \dlcom \slinkp{s}{a} \\
(X \dlcom \pflinkp{a}{f}) \dlori \pflinkp{b}{g} =
(X \dlori \pflinkp{b}{g}) \dlcom \pflinkp{a}{f}
 & \mif a \neq b \lor f \neq g \\
(X \dlcom \flinkp{a}{f}{b}) \dlori \pflinkp{c}{g} =
(X \dlori \pflinkp{c}{g}) \dlcom \flinkp{a}{f}{b}
 & \mif a \neq c \lor f \neq g \\
(X \dlcom \valass{a}{n}) \dlori \pflinkp{b}{f} =
(X \dlori \pflinkp{b}{f}) \dlcom \valass{a}{n} \\
(X \dlcom \slinkp{s}{a}) \dlori \flinkp{b}{f}{c} =
(X \dlori \flinkp{b}{f}{c}) \dlcom \slinkp{s}{a} \\
(X \dlcom \pflinkp{a}{f}) \dlori \flinkp{b}{g}{c} =
(X \dlori \flinkp{b}{g}{c}) \dlcom \pflinkp{a}{f}
 & \mif a \neq b \lor f \neq g \\
(X \dlcom \flinkp{a}{f}{b}) \dlori \flinkp{c}{g}{d} =
(X \dlori \flinkp{c}{g}{d}) \dlcom \flinkp{a}{f}{b}
 & \mif a \neq c \lor f \neq g \\
(X \dlcom \valass{a}{n}) \dlori \flinkp{b}{f}{c} =
(X \dlori \flinkp{b}{f}{c}) \dlcom \valass{a}{n} \\
(X \dlcom \slinkp{s}{a}) \dlori \valass{b}{n} =
(X \dlori \valass{b}{n}) \dlcom \slinkp{s}{a} \\
(X \dlcom \pflinkp{a}{f}) \dlori \valass{b}{n} =
(X \dlori \valass{b}{n}) \dlcom \pflinkp{a}{f} \\
(X \dlcom \flinkp{a}{f}{b}) \dlori \valass{c}{n} =
(X \dlori \valass{c}{n}) \dlcom \flinkp{a}{f}{b} \\
(X \dlcom \valass{a}{n}) \dlori \valass{b}{m} =
(X \dlori \valass{b}{m}) \dlcom \valass{a}{n}
 & \mif a \neq b
\end{seqncol}
\end{eqntbl}
\end{table}
In this table, $s$ and $t$ stand for arbitrary spots from $\Spot$,
$f$ and $g$ stand for arbitrary fields from $\Field$,
$a$, $b$, $c$ and $d$ stand for arbitrary atomic objects from $\AtObj$,
and $n$ and $m$ stand for arbitrary values from $\Value$.

The set $\cB$ of \emph{basic terms} over \DLA\ is inductively defined by
the following rules:
\begin{itemize}
\item
$\emptydl \in \cB$;
\item
if $s \in \Spot$ and $a \in \AtObj$, then $\slink{s}{a} \in \cB$;
\item
if $a \in \AtObj$ and $f \in \Field$, then $\pflink{a}{f} \in \cB$;
\item
if $a,b \in \AtObj$ and $f \in \Field$, then $\flink{a}{f}{b} \in \cB$;
\item
if $a \in \AtObj$ and $n \in \Value$, then $\valass{a}{n} \in \cB$;
\item
if $L_1,L_2 \in \cB$, then $L_1 \dlcom L_2 \in \cB$.
\end{itemize}
\begin{theorem}
\label{theorem-elimination}
For all closed \DLA\ terms $L$, there exists a basic term $L' \in \cB$
such that $L = L'$ is derivable from the axioms of \DLA.
\end{theorem}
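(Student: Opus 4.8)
The plan is to prove this elimination result by structural induction on the closed \DLA\ term $L$, following the standard recipe for such theorems: show that basic terms are closed (up to provable equality) under all the operators of \DLA, and then the general statement follows immediately.

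\medskip\noindent\textbf{Proof proposal.}
The plan is to proceed by induction on the structure of the closed \DLA\ term $L$. The base cases are the constants $\slink{s}{a}$, $\pflink{a}{f}$, $\flink{a}{f}{b}$, $\valass{a}{n}$ and $\emptydl$; each of these is already a basic term, so there is nothing to prove. For the inductive step there are two operators to consider. If $L = L_1 \dlcom L_2$, then by the induction hypothesis there are basic terms $L_1', L_2' \in \cB$ with $L_1 = L_1'$ and $L_2 = L_2'$ derivable, and then $L = L_1' \dlcom L_2'$ is derivable; since $\cB$ is closed under $\dlcom$ by its last formation rule, $L_1' \dlcom L_2' \in \cB$ and we are done. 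The real work is the case $L = L_1 \dlori L_2$: by the induction hypothesis we may assume $L = M \dlori N$ with $M, N \in \cB$, and we must find a basic term provably equal to $M \dlori N$.

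\medskip\noindent\textbf{The key lemma.}
The heart of the argument is an auxiliary claim: for all basic terms $M, N \in \cB$ there is a basic term $P \in \cB$ such that $M \dlori N = P$ is derivable from the axioms of \DLA. I would prove this by induction on the structure of $N$ (the right-hand argument), using the axioms for $\dlori$ in Table~\ref{axioms-DLA}. If $N = \emptydl$, then $M \dlori \emptydl = M$ by the axiom $X \dlori \emptydl = X$, and $M \in \cB$. If $N = N_1 \dlcom N_2$, then by the distributivity axiom $X \dlori (Y \dlcom Z) = (X \dlori Y) \dlcom (X \dlori Z)$ we get $M \dlori N = (M \dlori N_1) \dlcom (M \dlori N_2)$, and two applications of the induction hypothesis together with closure of $\cB$ under $\dlcom$ finish this case. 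The remaining cases are those where $N$ is one of the atomic link constants. Here one performs a secondary induction on the structure of $M$: when $M = \emptydl$ one uses $\emptydl \dlori X = X$; when $M$ is itself atomic, one observes that $M \dlori N$ is handled by viewing it as $(\emptydl \dlcom M) \dlori N$ and appealing to the relevant override axiom or commuting axiom; and when $M = M_1 \dlcom M_2$ one uses precisely the axioms of the form $(X \dlcom c) \dlori c' = \dots$, which come in two flavours — the ``overriding'' ones (where $c$ is absorbed, e.g.\ $(X \dlcom \slinkp{s}{a}) \dlori \slinkp{s}{b} = X \dlori \slinkp{s}{b}$) and the ``commuting'' ones (where $c$ is pushed outside, e.g.\ $(X \dlcom \slinkp{s}{a}) \dlori \slinkp{t}{b} = (X \dlori \slinkp{t}{b}) \dlcom \slinkp{s}{a}$ when $s \neq t$). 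In either flavour the term $X \dlori N$ that remains has a strictly smaller left argument, so the secondary induction hypothesis applies and yields a basic term; reattaching the pushed-out atomic constant (if any) via $\dlcom$ keeps us inside $\cB$.

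\medskip\noindent\textbf{Main obstacle.}
I expect the delicate point to be the bookkeeping in the secondary induction on $M$ for the atomic-$N$ cases: one must check that the axioms of Table~\ref{axioms-DLA} genuinely cover every combination of the \emph{kind} of atomic constant sitting at the top of (the right factor of) $M$ with the kind of atomic constant $N$, and that in every case the rewriting strictly decreases the left argument. This is a finite but somewhat tedious case analysis — spot link against spot link, partial field link against field link, field link against partial field link with matching versus non-matching $a$ and $f$, value associations, and all the cross combinations — and it relies on the axioms being complete for exactly these pairings, together with the fact that $\dlcom$ is commutative, associative and idempotent (the first four axioms), which lets us bring any chosen atomic summand of $M$ to the top so that an override-or-commute axiom becomes applicable. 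Once the key lemma is established, the theorem follows by a one-line appeal to it in the $\dlori$ case of the outer induction.
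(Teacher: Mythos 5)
Your proposal is correct and is essentially the standard elimination argument that the paper relies on (the paper itself only cites Theorem~1 of~\cite{BM08d} for the proof): an outer structural induction in which the only nontrivial case is $\dlori$, settled by a lemma that $\cB$ is closed under $\dlori$ up to derivable equality, proved by induction on the right argument and, for atomic right arguments, a secondary induction on the number of atomic summands of the left argument using the override/commute axioms, whose side conditions do exhaustively cover all pairings. No gaps.
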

\begin{proof}
See Theorem~1 in~\cite{BM08d}.
\end{proof}

We are only interested in the initial model of \DLA.
We write $\DL$ for the set of all elements of the initial model of
$\DLA$.
$\DL$ consists of the equivalence classes of basic terms over \DLA\ with
respect to the equivalence induced by the axioms of \DLA.
In other words, modulo equivalence, $\cB$ is $\DL$.
Henceforth, we will identify basic terms over \DLA\ and their
equivalence classes.

\section{Data Linkage Dynamics}
\label{sect-DLD}

\DLD\ (Data Linkage Dynamics) is a simple model of computation that
bears on the use of dynamic data structures in programming.
It comprises states, basic actions, and the state changes and
replies that result from performing the basic actions.
The states of \DLD\ are data linkages.
In this section, we give an informal explanation of the basic actions of
\DLD\ to structure data dynamically.
The basic actions of \DLD\ to deal with values found in dynamically
structured data, as well as some actions related to reclaiming garbage,
are not explained.
For a comprehensive presentation of \DLD, the reader is referred
to~\cite{BM08d}.

Like in \DLA, it is assumed that a fixed but arbitrary finite set
$\Spot$ of spots, a fixed but arbitrary finite set $\Field$ of fields,
and a fixed but arbitrary finite set $\AtObj$ of atomic objects have
been given.
It is also assumed that a fixed but arbitrary \emph{choice}
function $\funct{\cf}{(\setof{(\AtObj)} \diff \emptyset)}{\AtObj}$ such
that, for all $A \in \setof{(\AtObj)} \diff \emptyset$, $\cf(A) \in A$
has been given.
The function $\cf$ is used whenever a fresh atomic object must be
obtained.

Below, we will informally explain the features of \DLD\ to structure data
dynamically.
When speaking informally about a state $L$ of \DLD, we say:
\begin{itemize}
\item
if there exists a unique atomic object $a$ for which $\slink{s}{a}$ is
contained in $L$, \emph{the content of spot $s$} instead of the unique
atomic object $a$ for which $\slink{s}{a}$ is contained in $L$;
\item
\emph{the fields of atomic object $a$} instead of the set of all fields
$f$ such that either $\pflink{a}{f}$ is contained in $L$ or there
exists an atomic object $b$ such that $\flink{a}{f}{b}$ is contained in
$L$;
\item
if there exists a unique atomic object $b$ for which $\flink{a}{f}{b}$
is contained in $L$, \emph{the content of field $f$ of atomic object
$a$} instead of the unique atomic object $b$ for which $\flink{a}{f}{b}$
is contained in $L$.
\end{itemize}
In the case where the uniqueness condition is met, the spot or field
concerned is called \emph{locally deterministic}.

\DLD\ has the following basic actions to structure data dynamically:
\begin{itemize}
\item
for each $s \in \Spot$,
a \emph{get fresh atomic object action} $\getatobj{s}$;
\item
for each $s,t \in \Spot$, a \emph{set spot action} $\setspot{s}{t}$;
\item
for each $s \in \Spot$, a \emph{clear spot action} $\clrspot{s}$;
\item
for each $s,t \in \Spot$,
an \emph{equality test action} $\equaltst{s}{t}$;
\item
for each $s \in \Spot$,
an \emph{undefinedness test action} $\undeftst{s}$;
\item
for each $s \in \Spot$ and $f \in \Field$,
a \emph{add field action} $\addfield{s}{f}$;
\item
for each $s \in \Spot$ and $f \in \Field$,
a \emph{remove field action} $\rmvfield{s}{f}$;
\item
for each $s \in \Spot$ and $f \in \Field$,
a \emph{has field action} $\hasfield{s}{f}$;
\item
for each $s,t \in \Spot$ and $f \in \Field$,
a \emph{set field action} $\setfield{s}{f}{t}$;
\item
for each $s \in \Spot$ and $f \in \Field$,
a \emph{clear field action} $\clrfield{s}{f}$;
\item
for each $s,t \in \Spot$ and $f \in \Field$,
a \emph{get field action} $\getfield{s}{t}{f}$.
\end{itemize}

If only locally deterministic spots and fields are involved, these
actions can be explained as follows:
\begin{itemize}
\item
$\getatobj{s}$:
if a fresh atomic object can be allocated, then the content of spot $s$
becomes that fresh atomic object and the reply is $\True$; otherwise,
nothing changes and the reply is $\False$;
\item
$\setspot{s}{t}$:
the content of spot $s$ becomes the same as the content of spot $t$
and the reply is $\True$;
\item
$\clrspot{s}$:
the content of spot $s$ becomes undefined and the reply is $\True$;
\item
$\equaltst{s}{t}$:
if the content of spot $s$ equals the content of spot $t$, then
nothing changes and the reply is $\True$; otherwise, nothing changes and
the reply is $\False$;
\item
$\undeftst{s}$:
if the content of spot $s$ is undefined, then nothing changes and the
reply is $\True$; otherwise, nothing changes and the reply is $\False$;
\item
$\addfield{s}{f}$:
if the content of spot $s$ is an atomic object and $f$ does not yet
belong to the fields of that atomic object, then $f$ is added (with
undefined content) to the fields of that atomic object and the reply is
$\True$; otherwise, nothing changes and the reply is $\False$;
\item
$\rmvfield{s}{f}$:
if the content of spot $s$ is an atomic object and $f$ belongs to the
fields of that atomic object, then $f$ is removed from the fields of
that atomic object and the reply is $\True$; otherwise, nothing changes
and the reply is $\False$;
\item
$\hasfield{s}{f}$:
if the content of spot $s$ is an atomic object and $f$ belongs to the
fields of that atomic object, then nothing changes and the reply is
$\True$; otherwise, nothing changes and the reply is $\False$;
\item
$\setfield{s}{f}{t}$:
if the content of spot $s$ is an atomic object and $f$ belongs to the
fields of that atomic object, then the content of that field becomes the
same as the content of spot $t$ and the reply is $\True$; otherwise,
nothing changes and the reply is $\False$;
\item
$\clrfield{s}{f}$:
if the content of spot $s$ is an atomic object and $f$ belongs to the
fields of that atomic object, then the content of that field becomes
undefined and the reply is $\True$; otherwise, nothing changes and the
reply is $\False$;
\item
$\getfield{s}{t}{f}$:
if the content of spot $t$ is an atomic object and $f$ belongs to the
fields of that atomic object, then the content of spot $s$ becomes the
same as the content of that field and the reply is $\True$; otherwise,
nothing changes and the reply is~$\False$.
\end{itemize}
In the explanation given above, wherever we say that the content of a
spot or field becomes the same as the content of another spot or field,
this is meant to imply that the former content becomes undefined if the
latter content is undefined.
If not only locally deterministic spots and fields are involved in
performing an action, there is no state change and the reply is
$\False$.

Atomic objects that are not reachable via spots and fields can be
reclaimed.
Reclamation of unreachable atomic objects is relevant because the set
$\AtObj$ of atomic objects is finite.
In~\cite{BM08d}, we introduce various ways to achieve reclamation of
unreachable atomic objects.
In this section, we mention only one of the reclamation-related actions:
the \emph{full garbage collection action} $\fgc$.
By performing this action, all unreachable atomic objects are reclaimed.
The reply that results from performing this action is always $\True$.

We write $\Act_\DLD$ for the set of all basic actions of \DLD.

In~\cite{BM08d}, we describe the state changes and replies that result
from performing the basic actions of \DLD\ by means of a term rewrite
system with rule priorities~\cite{BBKW89a}.
For that purpose, a unary \emph{effect} operator $\effop{\alpha}$ and a
unary \emph{yield} operator $\yldop{\alpha}$ are introduced for each
basic action $\alpha \in \Act_\DLD$.
The intuition is that these operators stand for operations that give,
for each state $L$, the state and reply, respectively, that result from
performing basic action $\alpha$ in state $L$.

\section{Basic Thread Algebra}
\label{sect-BTA}

In this section, we review the algebraic theory \BTA\ (Basic Thread
Algebra), a form of process algebra which is tailored to the description
and analysis of the behaviours of sequential programs under execution.
The behaviours concerned are called \emph{threads}.

In \BTA, it is assumed that a fixed but arbitrary finite set $\BAct$ of
\emph{basic actions}, with $\Tau \not\in \BAct$, has been given.
We write $\BActTau$ for $\BAct \union \set{\Tau}$.
The members of $\BActTau$ are referred to as \emph{actions}.

Threads proceed by performing actions in a sequential fashion.
Each basic action performed by a thread is taken as a command to be
processed by some service provided by the execution environment of the
thread.
The processing of a command may involve a change of state of the service
concerned.
At completion of the processing of the command, the service returns a
reply value $\True$ or $\False$ to the thread concerned.

\BTA\ has one sort: the sort $\Thr$ of \emph{threads}.
To build terms of sort $\Thr$, \BTA\ has the following constants and
operators:
\begin{itemize}
\item
the \emph{deadlock} constant $\const{\DeadEnd}{\Thr}$;
\item
the \emph{termination} constant $\const{\Stop}{\Thr}$;
\item
for each $\alpha \in \BActTau$,
the binary \emph{postconditional composition} operator
$\funct{\pcc{\ph}{\alpha}{\ph}}{\Thr \x \Thr}{\Thr}$.
\end{itemize}
Terms of sort $\Thr$ are built as usual (see e.g.~\cite{Wir90a,ST99a}).
Throughout the paper, we assume that there are infinitely many variables
of sort $\Thr$, including $x,y,z$.

We use infix notation for postconditional composition.
We introduce \emph{action prefixing} as an abbreviation:
$\alpha \bapf p$, where $p$ is a term of sort $\Thr$, abbreviates
$\pcc{p}{\alpha}{p}$.

Let $p$ and $q$ be closed terms of sort $\Thr$ and
$\alpha \in \BActTau$.
Then $\pcc{p}{\alpha}{q}$ will perform action $\alpha$, and after that
proceed as $p$ if the processing of $\alpha$ leads to the reply $\True$
(called a positive reply), and proceed as $q$ if the processing of
$\alpha$ leads to the reply $\False$ (called a negative reply).
The action $\Tau$ plays a special role.
It is a concrete internal action: performing $\Tau$ will never lead to a
state change and always lead to a positive reply, but notwithstanding
all that its presence matters.

\BTA\ has only one axiom.
This axiom is given in Table~\ref{axioms-BTA}.%
\begin{table}[!t]
\caption{Axiom of \BTA}
\label{axioms-BTA}
\begin{eqntbl}
\begin{axcol}
\pcc{x}{\Tau}{y} = \pcc{x}{\Tau}{x}                      & \axiom{T1}
\end{axcol}
\end{eqntbl}
\end{table}

Each closed \BTA\ term of sort $\Thr$ denotes a finite thread, i.e.\ a
thread of which the length of the sequences of actions that it can
perform is bounded.
Guarded recursive specifications give rise to infinite threads.

A \emph{guarded recursive specification} over \BTA\ is a set of
recursion equations $E = \set{X = p_X \where X \in V}$, where $V$ is a
set of variables of sort $\Thr$ and each $p_X$ is a term of the form
$\DeadEnd$, $\Stop$ or $\pcc{p}{\alpha}{q}$ with $p$ and $q$ \BTA\ terms
of sort $\Thr$ that contain only variables from $V$.
We write $\vars(E)$ for the set of all variables that occur on the
left-hand side of an equation in $E$.
We are only interested in models of \BTA\ in which guarded recursive
specifications have unique solutions, such as the projective limit model
of \BTA\ presented in~\cite{BB03a}.

We extend \BTA\ with guarded recursion by adding constants for solutions
of guarded recursive specifications and axioms concerning these
additional constants.
For each guarded recursive specification $E$ and each $X \in \vars(E)$,
we add a constant of sort $\Thr$ standing for the unique solution of $E$
for $X$ to the constants of \BTA.
The constant standing for the unique solution of $E$ for $X$ is denoted
by $\rec{X}{E}$.
Moreover, we add the axioms for guarded recursion given in
Table~\ref{axioms-REC} to \BTA,%
\begin{table}[!t]
\caption{Axioms for guarded recursion}
\label{axioms-REC}
\begin{eqntbl}
\begin{saxcol}
\rec{X}{E} = \rec{t_X}{E} & \mif X \!=\! t_X \in E       & \axiom{RDP}
\\
E \limpl X = \rec{X}{E} & \mif X \in \vars(E)          & \axiom{RSP}
\end{saxcol}
\end{eqntbl}
\end{table}
where we write $\rec{t_X}{E}$ for $t_X$ with, for all $Y \in \vars(E)$,
all occurrences of $Y$ in $t_X$ replaced by $\rec{Y}{E}$.%
\footnote
{Throughout the paper, we use the symbol $\limpl$ for implication.}
In this table, $X$, $t_X$ and $E$ stand for an arbitrary variable of
sort $\Thr$, an arbitrary \BTA\ term of sort $\Thr$ and an arbitrary
guarded recursive specification over \BTA, respectively.
Side conditions are added to restrict the variables, terms and guarded
recursive specifications for which $X$, $t_X$ and $E$ stand.

Henceforth, we write \BTA+\REC\ for \BTA\ extended with the constants
for solutions of guarded recursive specifications and axioms RDP and
RSP.
Moreover, we write $\TThr$ for the set of all closed terms of \BTA+\REC.

In the following definition, the interpretation of a postconditional
composition operator in a model of \BTA+\REC\ is denoted by the operator
itself.
Let $\fM$ be some model of \BTA+\REC, and let $p$ be an element from the
domain of $\fM$.
Then the set of \emph{residual threads} of $p$, written $\Res(p)$, is
inductively defined as follows:
\begin{itemize}
\item
$p \in \Res(p)$;
\item
if $\pcc{q}{a}{r} \in \Res(p)$, then $q \in \Res(p)$ and
$r \in \Res(p)$.
\end{itemize}
We say that $p$ is \emph{regular} if $\Res(p)$ is finite.

We are only interested in models of \BTA+\REC\ in which the solution of
a guarded recursive specification $E$ over \BTA\ is regular if and only
if $E$ is finite, such as the projective limit model presented
in~\cite{BB03a}.
Par abus de langage, a closed term of \BTA+\REC\ without occurrences of
constants $\rec{X}{E}$ for infinite $E$ will henceforth be called a
\emph{regular thread}.

\section{A Use Mechanism for Forecasting Services}
\label{sect-TSI}

A thread may perform an action for the purpose of interacting with a
service that takes the action as a command to be processed.
The processing of the action may involve a change of state of the
service and at completion of the processing of the action the service
returns a reply value to the thread.
In this section, we introduce a mechanism that is concerned with this
kind of interaction.
It is a generalization of the use mechanism introduced in~\cite{BM08d}
to forecasting services.
A forecasting service is a service of which the state changes and
replies may depend on how the thread that performs the actions being
processed will proceed.

It is assumed that a fixed but arbitrary finite set $\Foci$ of
\emph{foci} and a fixed but arbitrary finite set $\Meth$ of
\emph{methods} have been given.
Each focus plays the role of a name of some service provided by an
execution environment that can be requested to process a command.
Each method plays the role of a command proper.
For the set $\BAct$ of actions, we take the set
$\set{f.m \where f \in \Foci, m \in \Meth}$.
Performing an action $f.m$ is taken as making a request to the
service named $f$ to process command~$m$.

Recall that $\TThr$ stands for the set of all closed terms of \BTA+\REC.

A \emph{forecasting service} $H$ consists of
\begin{itemize}
\item
a set $S$ of \emph{states};
\item
an \emph{effect} function $\funct{\eff}{\Meth \x S \x \TThr}{S}$;
\item
a \emph{yield} function
$\funct{\yld}{\Meth \x S \x \TThr}{\set{\True,\False,\Blocked}}$;
\item
an \emph{initial state} $s_0 \in S$;
\end{itemize}
satisfying the following conditions:
\begin{ldispl}
\Exists{s \in S}
 {\Forall{m \in \Meth, p \in \TThr}{{}}}
\\ \quad
(\yld(m,s,p) = \Blocked \land
 \Forall{s' \in S}
  {(\yld(m,s',p) = \Blocked \limpl \eff(m,s',p) = s)})\;,
\eqnsep
\Forall{s\in S, m,m' \in \Meth, f \in \Foci, p,q \in \TThr}{{}}
\\ \quad
 (\yld(m,s,\Stop) = \Blocked \land
  \yld(m,s,\DeadEnd) = \Blocked \land {}
\\ \quad \phantom{(}
  \yld(m,s,\Tau \bapf p) = \Blocked \land
  (m \neq m' \limpl
   \yld(m,s,\pcc{p}{f.m'}{q}) = \Blocked))\;.
\end{ldispl}%
The set $S$ contains the states in which the services may be, and the
functions $\eff$ and $\yld$ give, for each method $m$, state $s$ and
thread $p$, the state and reply, respectively, that result from
processing $m$ in state $s$ if $p$ is the thread that makes the request
to process $m$.
In certain states, requests to process certain methods may be rejected.
$\Blocked$, which stands for blocked, is used to indicate this.

Given a forecasting service $H = \tup{S,\eff,\yld,s_0}$, a method
$m \in \Meth$ and a thread $p \in \TThr$:
\begin{itemize}
\item
the \emph{derived service} of $H$ after processing $m$ in the context
of $p$, written $\derive{m}H[p]$, is the forecasting service
$\tup{S,\eff,\yld,\eff(m,s_0,p)}$;
\item
the \emph{reply} of $H$ after processing $m$ in the context of $p$,
written $H[p](m)$, is $\yld(m,s_0,p)$.
\end{itemize}

A forecasting service $H = \tup{S,\eff,\yld,s_0}$ can be understood as
follows:
\begin{itemize}
\item
if thread $p$ makes a request to the service to process $m$ and
$H[p](m) \neq \Blocked$, then the request is accepted, the reply is
$H[p](m)$, and the service proceeds as $\derive{m}H[p]$;
\item
if thread $p$ makes a request to the service to process $m$ and
$H[p](m) = \Blocked$, then the request is rejected.
\end{itemize}
By the first condition on forecasting services, after a request has been
rejected by the service, it gets into a state in which any request will
be rejected.
By the second condition on forecasting services, any request that does
not correspond to the action being performed by thread $p$ is rejected.

In the case of a forecasting service $H = \tup{S,\eff,\yld,s_0}$,
the derived service and reply that result from processing a method may
depend on how the thread that makes the request to process that method
will proceed.
Hence the name forecasting service.
Henceforth, we will omit the qualification forecasting if no confusion
can arise with other kinds of services.

We introduce yet another sort: the sort $\Serv$ of \emph{services}.
However, we will not introduce constants and operators to build terms of
this sort.
We demand that the interpretation of the sort $\Serv$ in a model is a
set $\FSs$ of forecasting services such that for all $H \in \FSs$,
$\derive{m}H[p] \in \FSs$ for each $m \in \Meth$ and $p \in \TThr$.

We introduce the following additional operators:
\begin{itemize}
\item
for each $f \in \Foci$, the binary \emph{use} operator
$\funct{\use{\ph}{f}{\ph}}{\Thr \x \Serv}{\Thr}$.
\end{itemize}
We use infix notation for the use operators.

Intuitively, $\use{p}{f}{H}$ is the thread that results from processing
all actions performed by thread $p$ that are of the form $f.m$ by
service $H$.
When an action of the form $f.m$ performed by thread $p$ is processed by
service $H$, that action is turned into the internal action $\Tau$ and
postconditional composition is removed in favour of action prefixing on
the basis of the reply value produced.
In previous work, we sometimes opted for the alternative to conceal the
processed actions completely.
However, we experienced repeatedly in cases where this alternative
appeared to be appropriate at first that it turned out to impede
progress later.

The axioms for the use operators are given in Table~\ref{axioms-use}.%
\begin{table}[!t]
\caption{Axioms for use operators}
\label{axioms-use}
\begin{eqntbl}
\begin{saxcol}
\use{\Stop}{f}{H} = \Stop                            & & \axiom{TSU1} \\
\use{\DeadEnd}{f}{H} = \DeadEnd                      & & \axiom{TSU2} \\
\use{(\Tau \bapf p)}{f}{H} =
                          \Tau \bapf (\use{p}{f}{H}) & & \axiom{TSU3} \\
\use{(\pcc{p}{g.m}{q})}{f}{H} =
\pcc{(\use{p}{f}{H})}{g.m}{(\use{q}{f}{H})}
 & \mif f \neq g                                       & \axiom{TSU4} \\
\use{(\pcc{p}{f.m}{q})}{f}{H} =
\Tau \bapf (\use{p}{f}{\derive{m}H[\pcc{p}{f.m}{q}]})
         & \mif H[\pcc{p}{f.m}{q}](m) = \True    & \axiom{TSU5} \\
\use{(\pcc{p}{f.m}{q})}{f}{H} =
\Tau \bapf (\use{q}{f}{\derive{m}H[\pcc{p}{f.m}{q}]})
         & \mif H[\pcc{p}{f.m}{q}](m) = \False   & \axiom{TSU6} \\
\use{(\pcc{p}{f.m}{q})}{f}{H} = \DeadEnd
         & \mif H[\pcc{p}{f.m}{q}](m) = \Blocked & \axiom{TSU7}
\end{saxcol}
\end{eqntbl}
\end{table}
In this table, $f$ and $g$ stand for arbitrary foci from $\Foci$, $m$
stands for an arbitrary method from $\Meth$, and $p$ and $q$ stand for
arbitrary closed terms of sort $\Thr$.
$H$ ranges over the interpretation of sort $\Serv$.
Axioms TSU3 and TSU4 express that the action $\Tau$ and actions of
the form $g.m$, where $f \neq g$, are not processed.
Axioms TSU5 and TSU6 express that a thread is affected by a service
as described above when an action of the form $f.m$ is processed by the
service.
Axiom TSU7 expresses that deadlock takes place when an action to be
processed is not accepted.

Henceforth, we write \BTAuse\ for \BTA, taking the set
$\set{f.m \where f \in \Foci, m \in \Meth}$ for $\BAct$, extended with
the use operators and the axioms from Table~\ref{axioms-use}.

The use mechanism introduced in~\cite{BM04c} deals in essence with
forecasting services of which:
\begin{itemize}
\item
the set of states is the set of all sequences with elements from
$\Meth$;
\item
the derived service and reply that result from processing a method do
not depend on how the thread that makes the request to process that
method will proceed.
\end{itemize}
For these services, the use mechanism introduced in this section
coincides with the use mechanism introduced in~\cite{BM04c}.
The architecture-dependent services considered in~\cite{BBP07a}
can be looked upon as simple forecasting services.

\section{Thread Algebra and Data Linkage Dynamics Combined}
\label{sect-comb-TA-DLD}

The state changes and replies that result from performing the actions
of data linkage dynamics can be achieved by means of services.
In this short section, we explain how basic thread algebra can be
combined with data linkage dynamics by means of the use mechanism
introduced in Section~\ref{sect-TSI} such that the whole can be used for
studying issues concerning the use of dynamic data structures in
programming.
The services involved do not have a forecasting nature.
The adapted services needed to deal with shedding, which are described
in Section~\ref{sect-DLDSs}, have a forecasting nature.

Recall that $\DL$ stands for the set of all elements of the initial
model of $\DLA$, and recall that, for each $\alpha \in \Act_\DLD$,
$\effop{\alpha}$ and $\yldop{\alpha}$ stand for unary operations on
$\DL$ that give, for $L \in \DL$, the state and reply, respectively,
that result from performing basic action $\alpha$ in state $L$.
It is assumed that a blocking state $\undef \not\in \DL$ has been given.

Take $\Meth$ such that $\Act_\DLD \subseteq \Meth$.
Moreover, let $L \in \DL \union \set{\undef}$.
Then the \emph{data linkage dynamics service} with initial state $L$,
written $\DLDS(L)$, is the service
$\tup{\DL \union \set{\undef},\eff,\yld,L}$, where the functions $\eff$
and $\yld$ are the effect and yield functions satisfying the
(unconditional and conditional) equations in Table~\ref{eqns-DLDS}.%
\begin{table}[!t]
\caption{Definition of effect and yield functions for \DLD}
\label{eqns-DLDS}
\begin{eqntbl}
\begin{seqncol}
\eff(m,L,\pcc{p}{f.m}{q}) = \effop{m}(L) & \mif m \in \Act_\DLD
\\
\eff(m,L,\pcc{p}{f.m}{q}) = \undef       & \mif m \not\in \Act_\DLD
\eqnsep
\yld(m,L,\pcc{p}{f.m}{q}) = \yldop{m}(L) & \mif m \in \Act_\DLD
\\
\yld(m,L,\pcc{p}{f.m}{q}) = \Blocked     & \mif m \not\in \Act_\DLD
\eqnsep
\yld(m,L,p) = \Blocked \limpl \eff(m,L,p) = \undef
\end{seqncol}
\end{eqntbl}
\end{table}
Notice that, because of the conditions imposed on forecasting services
in Section~\ref{sect-TSI}, these equations characterize the effect and
yield functions uniquely.

By means of threads and the data linkage dynamics services introduced
above, we can give a precise picture of computations in which dynamic
data structures are involved.
Examples of such computations can be found in~\cite{BM08d}.

The combination of basic thread algebra and data linkage dynamics by
means of the use mechanism can be used for studying issues concerning
the use of dynamic data structures in programming at the level of
program behaviours.
A hierarchy of simple program notations rooted in \PGA\ is presented
in~\cite{BL02a}.
Included are program notations which are close to existing assembly
languages up to and including program notations that support structured
programming by offering a rendering of conditional and loop constructs.
Regular threads are taken as the behaviours of programs in those program
notations.
Together with one of the program notations, the combination of basic
thread algebra and data linkage dynamics can be used for studying issues
concerning the use of dynamic data structures in programming at the
level of programs.
We mention one such issue.
In general terms, the issue is whether we can do without garbage
collection by program transformation at the price of a linear increase
of the number of available atomic objects.
In~\cite{BM08d}, we phrase this issue precisely for one of the program
notation rooted in \PGA.

The notation for the basic actions of \DLD, makes the focus-method
notation $f.m$ less suitable in the case where $m$ is a basic action of
\DLD.
Therefore, we will henceforth mostly write $f(m)$ instead of $f.m$ if
$m \in \Act_\DLD$.

\section{The Shedding Feature}
\label{sect-shedding}

In this section, we introduce the shedding feature in the setting of
data linkage dynamics in an informal way.
In Section~\ref{sect-DLDSs}, we will adapt the data linkage dynamics
services introduced in Section~\ref{sect-comb-TA-DLD} to explain
shedding in a more precise way.

Roughly speaking, shedding works as follows: each time the content of a
spot or field is changed, it is determined whether or not the spot or
field will possibly be used once again, and if not its content is made
undefined.
If a spot or field is made undefined in this way, we say that it is
shed.
The use of a previously shed spot or field is called a shedding error.

The shedding feature is rather non-obvious.
Consider the thread
\begin{ldispl}
\dld(\getatobj{s}) \bapf
(\pcc{(\dld(\setspot{u}{s}) \bapf \Stop)}{\dld(\getatobj{t})}{\Stop})
\end{ldispl}%
and assume that the cardinality of $\AtObj$ is $1$.
If $s$ is not shed on performing $\getatobj{s}$, then a negative reply
is produced on performing $\getatobj{t}$ and the thread terminates
without having made use of $s$.
However, from this it cannot be concluded that $s$ could be shed on
performing $\getatobj{s}$ after all.
If $s$ would be shed on performing $\getatobj{s}$, a positive reply
would be produced on performing $\getatobj{t}$ and after that a shedding
error would occur.
This shows that shedding becomes paradoxical if we do not deal properly
with the fact that shedding of a spot or field influences whether or not
that spot or field will possibly be used once again.

In the light of this, it is of the utmost importance to have the right
criterion for shedding in mind:
\begin{quote}
a spot or field can safely be shed if it is not possible for the program
behaviour under consideration to evolve in the case where that spot or
field is shed, irrespective as to whether other spots and fields are
subsequently shed, in such a way that the first shedding error concerns
that spot or field.
\end{quote}
When speaking about applications of this criterion, shedding errors that
concern the spot or field to which the criterion is applied are called
\emph{primary} shedding errors and other shedding errors are called
\emph{secondary} shedding errors.

In Section~\ref{sect-comb-TA-DLD}, it was explained how basic thread
algebra can be combined with data linkage dynamics by means of the use
mechanism from Section~\ref{sect-TSI} in such a way that the whole can
be used for studying issues concerning the use of dynamic data
structures in programming.
For a clear apprehension of data linkage dynamics as presented in
Section~\ref{sect-DLD}, such a combination is not needed.
This is different for shedding: it cannot be explained without reference
to program behaviours.
In Section~\ref{sect-DLDSs}, we adapt the data linkage dynamics services
involved in the combination described in Section~\ref{sect-comb-TA-DLD}
to explain shedding.

For the adapted data linkage dynamics services, shedding happens to be a
matter close to reflection on itself.
Material to the adaptation is the above-mentioned criterion for shedding
a spot or field.
Instrumental in checking this criterion are the data linkage dynamics
services for a minor variation of \DLD.
It concerns services which support the mimicking of shedding.

\section{Mimicking of Shedding}
\label{sect-DLDSm}

The shedding supporting data linkage dynamics services, which will be
introduced in Section~\ref{sect-DLDSs}, check the criterion for shedding
adopted in Section~\ref{sect-shedding} by determining what would happen
if mimicking of shedding supporting data linkage dynamics services were
used.
In this section, we describe the mimicking of shedding supporting data
linkage dynamics services in question.

These services are data linkage dynamics services for a variation of
\DLD.
The variation concerned, referred to as \DLDm, differs from \DLD\ as
follows:
\begin{itemize}
\item
it has two additional atomic objects $\pso$ and $\sso$;
\item
for each $s \in \Spot$, it has two additional basic actions
$\setspot{s}{\pso}$ and $\setspot{s}{\sso}$;
\item
for each $s \in \Spot$ and $f \in \Field$, it has two additional basic
actions $\setfield{s}{f}{\pso}$ and $\setfield{s}{f}{\sso}$;
\item
on performing $\getatobj{s}$, the contents of spot $s$ never becomes
$\pso$ or $\sso$;
\item
on performing $\fgc$, $\pso$ and $\sso$ are never reclaimed.
\end{itemize}

If only locally deterministic spots and fields are involved, the
additional basic actions can be explained as follows:
\begin{itemize}
\item
$\setspot{s}{\pso}$:
the content of spot $s$ becomes $\pso$ and the reply is $\True$;
\item
$\setspot{s}{\sso}$:
the content of spot $s$ becomes $\sso$ and the reply is $\True$;
\item
$\setfield{s}{f}{\pso}$:
if the content of spot $s$ is an atomic object and $f$ belongs to the
fields of that atomic object, then the content of that field becomes
$\pso$ and the reply is $\True$; otherwise, nothing changes and the
reply is $\False$;
\item
$\setfield{s}{f}{\sso}$:
if the content of spot $s$ is an atomic object and $f$ belongs to the
fields of that atomic object, then the content of that field becomes
$\sso$ and the reply is $\True$; otherwise, nothing changes and the
reply is $\False$.
\end{itemize}
If not only locally deterministic spots and fields are involved in
performing an action, there is no state change and the reply is
$\False$.

The special atomic objects $\pso$ and $\sso$ are used as follows:
\begin{itemize}
\item
when checking of the criterion for some spot or field starts, the
shedding of that spot or field is mimicked by setting its content to
$\pso$;
\item
during checking, the shedding of another spot or field is mimicked by
setting its content to $\sso$.
\end{itemize}
If a spot or field is used whose content is $\pso$, a mimicked primary
shedding error is encountered and, if a spot or field is used whose
content is $\sso$, a mimicked secondary shedding error is encountered.

Different sets of spots, sets of fields or sets of atomic objects give
rise to different instances of \DLA.
The states of \DLD\ are the elements of the initial model of some
instance of \DLA.
Because of the two additional atomic objects, the states of \DLDm\ are
the elements of the initial model of another instance of \DLA.
Henceforth, we write $\DL$ for the set of all elements of the initial
model of former instance of \DLA\ and $\DLm$ for the set of all elements
of the initial model of latter instance of \DLA.
In~\cite{BM08d}, we describe the state changes and replies that result
from performing basic actions of \DLD\ by means of a term rewrite system
with rule priorities.
It is obvious how that term rewrite system must be adapted to obtain a
term rewrite system describing the state changes and replies that result
from performing basic actions of \DLDm.
For each basic action $\alpha$ of \DLDm, we write $\effopm{\alpha}$ and
$\yldopm{\alpha}$ for the effect and yield operators that go with
$\alpha$ in the latter term rewrite system.
Moreover, we write $A_\DLDm$ for the set of all basic actions of \DLDm.

Let $L \in \DLm \union \set{\undef}$.
Then the \emph{mimicking of shedding supporting data linkage dynamics
service} with initial state $L$, written $\DLDSm(L)$, is the service
$\tup{\DLm \union \set{\undef},\effm,\yldm,L}$, where the functions
$\effm$ and $\yldm$ are the effect and yield functions satisfying the
equations in Table~\ref{eqns-DLDSm}.%
\begin{table}[!t]
\caption{Definition of effect and yield functions for \DLD\ with
  mimicking of shedding}
\label{eqns-DLDSm}
\begin{eqntbl}
\begin{seqncol}
\effm(m,L,\pcc{p}{f.m}{q}) = \effopm{m}(L) & \mif m \in \Act_\DLDm
\\
\effm(m,L,\pcc{p}{f.m}{q}) = \undef        & \mif m \not\in \Act_\DLDm
\eqnsep
\yldm(m,L,\pcc{p}{f.m}{q}) = \yldopm{m}(L) & \mif m \in \Act_\DLDm
\\
\yldm(m,L,\pcc{p}{f.m}{q}) = \Blocked      & \mif m \not\in \Act_\DLDm
\eqnsep
\yldm(m,L,p) = \Blocked \limpl \effm(m,L,p) = \undef
\end{seqncol}
\end{eqntbl}
\end{table}

\section{Shedding Supporting Data Linkage Dynamics Services}
\label{sect-DLDSs}

In this section, we turn to the data linkage dynamic services that
support shedding themselves.

We assume that $\dld \in \Foci$.
It is supposed that requests to a shedding supporting data linkage
dynamics service to process basic actions of \DLD\ are always made using
the focus $\dld$.
We write $A^\sh_\DLD$ for the set of all basic actions of \DLD\ that are
of the form $\getatobj{s}$, $\setspot{s}{t}$, $\setfield{s}{f}{t}$ or
$\getfield{s}{t}{f}$.

In the definition of shedding supporting data linkage dynamics services
given below, we use an auxiliary function $\funct{\shv}{A_\DLD}{A_\DLD}$
and a set $\shok \subseteq \TThr \x \DL$.

The function $\shv$ gives, for each basic action of \DLD\ for changing
the content of a spot or field, the basic action of \DLD\ for making the
content of that spot or field undefined.
For each other basic action of \DLD, $\shv$ gives the basic action
itself.
The function $\shv$ is defined as follows:
\begin{ldispl}
\shv(\getatobj{s}) = (\clrspot{s})\;, \\
\shv(\setspot{s}{t}) = (\clrspot{s})\;, \\
\shv(\setfield{s}{f}{t}) = (\clrfield{s}{f})\;, \\
\shv(\getfield{s}{t}{f}) = (\clrspot{s})\;, \\
\shv(\alpha) = \alpha \hsp{.5} \mif \alpha \not\in A^\sh_\DLD\;.
\end{ldispl}%

In the definition of the set $\shok$, we use an auxiliary function
$\funct{\mshv}{\set{0,1,2} \x A_\DLD}{A_\DLDm}$ and, for each
$L \in \DLm$, sets $nosherr(L), \secsherr(L) \subseteq A_\DLD$.

The function $\mshv$ gives, for each natural number in the set
$\set{0,1,2}$ and each basic action of \DLD\ for changing the content of
a spot or field: the basic action itself if the number is $0$,
the basic action of \DLDm\ for making the content of that spot or field
$\pso$ if the number is $1$, and
the basic action of \DLDm\ for making the content of that spot or field
$\sso$ if the number is $2$.
For each other basic action of \DLD, $\mshv$ gives always the basic
action itself.
The function $\mshv$ is defined as follows:
\begin{ldispl}
\begin{geqns}
\mshv(0,\alpha) = \alpha\;, \\
\mshv(1,\getatobj{s}) = (\setspot{s}{\pso})\;, \\
\mshv(1,\setspot{s}{t}) = (\setspot{s}{\pso})\;, \\
\mshv(1,\setfield{s}{f}{t}) = (\setfield{s}{f}{\pso})\;, \\
\mshv(1,\getfield{s}{t}{f}) = (\setspot{s}{\pso})\;,
\end{geqns}
\qquad
\begin{geqns}
\mshv(i,\alpha) = \alpha \hfill \mif \alpha \not\in A^\sh_\DLD\;, \\
\mshv(2,\getatobj{s}) = (\setspot{s}{\sso})\;, \\
\mshv(2,\setspot{s}{t}) = (\setspot{s}{\sso})\;, \\
\mshv(2,\setfield{s}{f}{t}) = (\setfield{s}{f}{\sso})\;, \\
\mshv(2,\getfield{s}{t}{f}) = (\setspot{s}{\sso})\;.
\end{geqns}
\end{ldispl}%

For each $L \in \DLm$, the set $\nosherr(L)$ contains all basic actions
$\alpha \in A_\DLD$ whose use in state $L$ does not amount to a mimicked
shedding error and the set $\secsherr(L)$ contains all basic actions
$\alpha \in A_\DLD$ whose use in state $L$ amounts to a mimicked
secondary shedding error.
For each $L \in \DLm$, the set $\nosherr(L)$ is inductively defined as
follows:
\begin{itemize}
\item
$\getatobj{s},\, \clrspot{s} \in \nosherr(L)$;
\item
if $L \dlcom \slinkp{s}{a} = L$, $a \neq \pso$ and $a \neq \sso$, \\
then
$\setspot{t}{s},\, \undeftst{s},\,
 \addfield{s}{f},\, \rmvfield{s}{f},\, \hasfield{s}{f} \in \nosherr(L)$;
\item
if $L \dlcom \slinkp{s}{a} \dlcom \slinkp{t}{b} = L$, $a \neq \pso$,
$a \neq \sso$, $b \neq \pso$ and $b \neq \sso$, \\
then $\equaltst{s}{t},\, \setfield{s}{f}{t} \in \nosherr(L)$;
\item
if $L \dlcom \slinkp{s}{a} \dlcom \flinkp{a}{f}{b} = L$, $a \neq \pso$,
$a \neq \sso$, $b \neq \pso$ and $b \neq \sso$, \\
then $\getfield{t}{s}{f} \in \nosherr(L)$;
\end{itemize}
and the set $\secsherr(L)$ is inductively defined as follows:
\begin{itemize}
\item
if $L \dlcom \slinkp{s}{\sso} = L$, \\
then
$\setspot{t}{s},\, \equaltst{s}{t},\, \equaltst{t}{s},\, \undeftst{s},\,
 \addfield{s}{f},\, \rmvfield{s}{f},\, \hasfield{s}{f},\,
 \setfield{s}{f}{t},\, \setfield{t}{f}{s},\,
 \\ \phantom{\mbox{then }} \getfield{t}{s}{f} \in \secsherr(L)$;
\item
if $L \dlcom \slinkp{s}{a} \dlcom \flinkp{a}{f}{\sso} = L$, \\
then $\getfield{t}{s}{f} \in \secsherr(L)$.
\end{itemize}

The set $\shok$ contains all pairs $\tup{p,L} \in \TThr \x \DL$ such
that, if the first action that is performed by $p$ is an action of the
form $\dld.m$, where $m$ is a basic action of \DLD\ for changing the
content of a spot or field, the criterion for shedding of that spot or
field is met.
The general idea underlying the definition of $\shok$ given below is
that the criterion for shedding can be checked by mimicking shedding.
In checking, all possibilities must be considered:
\begin{itemize}
\item
if an action of the form $f.m$ with $f \neq \dld$ is encountered, then
two possibilities arise: (i)~the reply is $\True$ and (ii)~the reply is
$\False$;
\item
if an action of the form $\dld.m$ with $m$ a basic action of \DLD\ of
the form $\getatobj{s}$, $\setspot{s}{t}$, $\setfield{s}{f}{t}$ or
$\getfield{s}{t}{f}$ is encountered, then two possibilities arise:
(i)~the spot or field eligible for shedding is not shed and (ii)~the
spot or field eligible for shedding is shed.
\end{itemize}
In general, this means that many paths must be followed.
For regular threads, the number of paths to be followed will remain
finite and eventually either termination, deadlock, a mimicked primary
shedding error, a mimicked secondary shedding error or a cycle without
mimicked shedding errors will be encountered along each of the paths to
be followed.
The criterion for shedding is met if along each of the paths to be
followed it is not a mimicked primary shedding error that is encountered
first.
For non-regular threads, it is undecidable whether the criterion for
shedding is met.

The set $\shok$ is defined by $\shok = \shok'(1,\emptyset)$, where the
sets $\shok'(i,C) \subseteq \TThr \x \DLm$ for $i \in \set{0,1,2}$ and
$C \subseteq \TThr \x \DLm$ are defined by simultaneous induction as
follows:
\begin{itemize}
\item
$\tup{\Stop,L}, \tup{\DeadEnd,L} \in \shok'(i,C)$;
\item
if $f \neq \dld$,
\\ \phantom{if}
$\tup{p,L} \in
 \shok'(0,C \union \set{\tup{\pcc{p}{f.m}{q},L}})$,
\\ \phantom{if}
$\tup{p,L} \in
 \shok'(2,C \union \set{\tup{\pcc{p}{f.m}{q},L}})$,
\\ \phantom{if}
$\tup{q,L} \in
 \shok'(0,C \union \set{\tup{\pcc{p}{f.m}{q},L}})$,
\\ \phantom{if}
$\tup{q,L} \in
 \shok'(2,C \union \set{\tup{\pcc{p}{f.m}{q},L}})$,
\\
then $\tup{\pcc{p}{f.m}{q},L} \in \shok'(i,C)$;
\item
if $m \in \nosherr(L)$,
\\ \phantom{if}
$\use{(\pcc{p}{\dld.\mshv(i,m)}{q})}{\dld}{\DLDSm(L)} =
 \Tau \bapf (\use{r}{\dld}{\DLDSm(L')})$,
\\ \phantom{if}
$\tup{r,L'} \in
 \shok'(0,C \union \set{\tup{\pcc{p}{\dld.m}{q},L}})$,
\\ \phantom{if}
$\tup{r,L'} \in
 \shok'(2,C \union \set{\tup{\pcc{p}{\dld.m}{q},L}})$,
\\
then $\tup{\pcc{p}{\dld.m}{q},L} \in \shok'(i,C)$;
\item
if $m \in \secsherr(L)$,
\\
then $\tup{\pcc{p}{\dld.m}{q},L} \in \shok'(i,C)$;
\item
if $\rec{X}{E} \in \TThr$, $X = t_X \,\in\, E$,
$\tup{\rec{t_X}{E},L} \in \shok'(i,C)$,
\\
then $\tup{\rec{X}{E},L} \in \shok'(i,C)$;
\item
if $\tup{p,L} \in C$, then $\tup{p,L} \in \shok'(i,C)$.
\end{itemize}

In $\shok'(i,C)$, $i$ corresponds to the way in which a basic action of
\DLD\ for changing the content of a spot or field is dealt with in
checking:
\begin{itemize}
\item
without mimicking of its shedding if $i = 0$;
\item
with mimicking of its shedding by means of $\pso$ if $i = 1$;
\item
with mimicking of its shedding by means of $\sso$ if $i = 2$.
\end{itemize}
The members of $C$ correspond to the combinations of thread and state
encountered before in checking.
If such a combination is encountered again, this indicates a cycle
without shedding errors because a path is not followed further after
termination, deadlock, a mimicked shedding error or a cycle without
mimicked shedding errors has been encountered.

By the occurrence of the equation
$\use{(\pcc{p}{\dld.\mshv(i,m)}{q})}{\dld}{\DLDSm(L)} =
 \Tau \bapf (\use{r}{\dld}{\DLDSm(L')})$
in the third rule of the inductive definition of the sets $\shok'(i,C)$,
a service that is engaged in checking whether a pair
$\tup{p,L} \in \TThr \x \DL$ belongs to $\shok$ is close to reflecting
on itself.

Now, we are ready to define, for $L \in \DL \union \set{\undef}$, a data
linkage dynamic service $\DLDSs(L)$ that supports shedding.

Let $L \in \DL \union \set{\undef}$.
Then the \emph{shedding supporting data linkage dynamics service} with
initial state $L$, written $\DLDSs(L)$, is the service
$\tup{\DL \union \set{\undef},\effs,\ylds,L}$, where the functions
$\effs$ and $\ylds$ are the effect and yield functions satisfying the
equations in Table~\ref{eqns-DLDSs}.%
\begin{table}[!t]
\caption{Definition of effect and yield functions for \DLD\ with
  shedding}
\label{eqns-DLDSs}
\begin{eqntbl}
\begin{seqncol}
\effs(m,L,\pcc{p}{f.m}{q}) = \effop{\shv(m)}(L)
 & \mif m \in A_\DLD \land \tup{\pcc{p}{f.m}{q},L} \in \shok
\\
\effs(m,L,\pcc{p}{f.m}{q}) = \effop{m}(L)
 & \mif m \in A_\DLD \land \tup{\pcc{p}{f.m}{q},L} \not\in \shok
\\
\effs(m,L,\pcc{p}{f.m}{q}) = \undef
 & \mif m \not\in A_\DLD
\eqnsep
\ylds(m,L,\pcc{p}{f.m}{q}) = \yldop{\shv(m)}(L)
 & \mif m \in A_\DLD \land \tup{\pcc{p}{f.m}{q},L} \in \shok
\\
\ylds(m,L,\pcc{p}{f.m}{q}) = \yldop{m}(L)
 & \mif m \in A_\DLD \land \tup{\pcc{p}{f.m}{q},L} \not\in \shok
\\
\ylds(m,L,\pcc{p}{f.m}{q}) = \Blocked
 & \mif m \not\in A_\DLD
\eqnsep
\ylds(m,L,p) = \Blocked \limpl \effs(m,L,p) = \undef
\end{seqncol}
\end{eqntbl}
\end{table}

\section{Examples}
\label{sect-examples}

In this section, we give two examples that illustrate how the definition
of shedding supporting data linkage dynamics services can be used to
determine whether in a fixed case a spot or field, whose contents should
be changed, is shed.
The first example concerns a case where a spot is shed and the second
example concerns a case where a spot is not shed.

\begin{example}
Let
\begin{ldispl}
\begin{aeqns}
p   & = &
\dld(\getatobj{s}) \bapf
(\pcc{\Stop}{\dld(\getatobj{t})}{\DeadEnd})\;, \\
p'  & = &
\pcc{\Stop}{\dld(\getatobj{t})}{\DeadEnd}\;, \\
p'' & = &
\pcc{\Stop}{\dld(\setspot{t}{\sso})}{\DeadEnd}\;.
\end{aeqns}
\end{ldispl}%
Thread $p'$ is a residual thread of $p$ and thread $p''$ is $p'$ with
$\getatobj{t}$ replaced by $\setspot{t}{\sso}$ to mimic shedding.
Assume that the cardinality of $\AtObj$ is $1$, and let $a$ be the
unique atomic object such that $\AtObj = \set{a}$.
Then in $\use{p}{\dld}{\DLDSs(\emptydl)}$, spot $s$ is shed on
performing $\getatobj{s}$.
This is straightforwardly shown using the definition of shedding
supporting data linkage dynamics services.
It follows immediately from the definition of $\nosherr$ that:
\begin{ldispl}
(\getatobj{s}) \in \nosherr(\emptydl)\;, \\
(\getatobj{t}) \in \nosherr(\slink{s}{\pso})\;,
\end{ldispl}%
and it follows easily from the axioms for the use operators and the
definition of mimicking of shedding supporting data linkage dynamics
services that:
\begin{ldispl}
\use{p}{\dld}{\DLDSs(\emptydl)} =
\Tau \bapf (\use{p'}{\dld}{\DLDSs(\slink{s}{\pso})})\;, \\
\use{p'}{\dld}{\DLDSs(\slink{s}{\pso})} =
\Tau \bapf
(\use{\Stop}{\dld}{\DLDSs(\slinkp{s}{\pso} \dlcom \slinkp{t}{a})})\;, \\
\use{p''}{\dld}{\DLDSs(\slink{s}{\pso})} =
\Tau \bapf
(\use{\Stop}{\dld}{\DLDSs(\slinkp{s}{\pso} \dlcom \slinkp{t}{\sso})})\;.
\end{ldispl}%
Hence, by the definitions of $\shok'$ and $\shok$:
\begin{ldispl}
\tup{p',\slink{s}{\pso}} \in \shok'(0,\set{\tup{p,\emptydl}})\;, \\
\tup{p',\slink{s}{\pso}} \in \shok'(2,\set{\tup{p,\emptydl}})\;, \\
\tup{p,\emptydl} \in \shok'(1,\emptyset)\;, \\
\tup{p,\emptydl} \in \shok\;.
\end{ldispl}%
From this it follows by the definition of $\effs$ that
$\effs(\getatobj{s},\emptydl,p) = \effop{\sh(\getatobj{s})}(\emptydl)$.

On account of shedding, we have that
\begin{ldispl}
\use{p}{\dld}{\DLDSs(\emptydl)} = \Tau \bapf \Tau \bapf \Stop\;,
\end{ldispl}%
whereas
\begin{ldispl}
\use{p}{\dld}{\DLDS(\emptydl)} = \Tau \bapf \Tau \bapf \DeadEnd\;.
\end{ldispl}%
The point is that a positive reply is produced on performing
$\getatobj{t}$ only if spot $s$ is shed on performing $\getatobj{s}$.
\end{example}

\begin{example}
Let
\begin{ldispl}
\begin{aeqns}
p   & = &
\dld(\getatobj{s}) \bapf
(\pcc{(\dld(\setspot{u}{s}) \bapf \Stop)}
     {\dld(\getatobj{t})}{\Stop})\;, \\
p'  & = &
\pcc{(\dld(\setspot{u}{s}) \bapf \Stop)}
    {\dld(\getatobj{t})}{\Stop}\;, \\
p'' & = & \dld(\setspot{u}{s}) \bapf \Stop\;.
\end{aeqns}
\end{ldispl}%
Thread $p$ is the same thread as the one discussed in
Section~\ref{sect-shedding} and threads $p'$ and $p''$ are residual
threads of $p$.
Assume again that the cardinality of $\AtObj$ is $1$, and let $a$ be the
unique atomic object such that $\AtObj = \set{a}$.
Then in $\use{p}{\dld}{\DLDSs(\emptydl)}$, spot $s$ is not shed on
performing $\getatobj{s}$.
This is easily shown using the definition of shedding supporting data
linkage dynamics services.
It follows immediately from the definition of $\nosherr$, the definition
of $\secsherr$, and basic set theory that:
\begin{ldispl}
(\setspot{u}{s}) \not\in
 \nosherr(\slinkp{s}{\pso} \dlcom \slinkp{t}{a})\;, \\
(\setspot{u}{s}) \not\in
 \secsherr(\slinkp{s}{\pso} \dlcom \slinkp{t}{a})\;, \\
\tup{p'',\slinkp{s}{\pso} \dlcom \slinkp{t}{a}} \not\in
 \set{\tup{p,\emptydl},\tup{p',\slink{s}{\pso}}}\;,
\end{ldispl}%
and it follows easily from the axioms for the use operators and the
definition of mimicking of shedding supporting data linkage dynamics
services that:
\begin{ldispl}
\use{p}{\dld}{\DLDSs(\emptydl)} =
\Tau \bapf (\use{p'}{\dld}{\DLDSs(\slink{s}{\pso})})\;, \\
\use{p'}{\dld}{\DLDSs(\slink{s}{\pso})} =
\Tau \bapf
(\use{\Stop}{\dld}{\DLDSs(\slinkp{s}{\pso} \dlcom \slinkp{t}{a})})\;.
\end{ldispl}%
Hence, by the definitions of $\shok'$ and $\shok$:
\begin{ldispl}
\tup{p'',\slinkp{s}{\pso} \dlcom \slinkp{t}{a}} \not\in
\shok'(0,\set{\tup{p,\emptydl},\tup{p',\slink{s}{\pso}}})\;, \\
\tup{p',\slink{s}{\pso}} \not\in \shok'(0,\set{\tup{p,\emptydl}})\;, \\
\tup{p,\emptydl} \not\in \shok'(1,\emptyset)\;, \\
\tup{p,\emptydl} \not\in \shok\;.
\end{ldispl}%
From this it follows by the definition of $\effs$ that
$\effs(\getatobj{s},\emptydl,p) = \effop{\getatobj{s}}(\emptydl)$.
\end{example}

\section{Conclusions}
\label{sect-concl}

We have introduced shedding in the setting of data linkage dynamics and
have adapted the data linkage dynamics services described
in~\cite{BM08d} so that they support shedding.
The adaptation shows that the shedding feature is rather non-obvious.
In particular, it is striking how much the matter is complicated by
taking into consideration the semantic effects of the fact that the
number of data objects that can exist at the same time is always
bounded.

We consider the work presented in this paper a semantic validation of
shedding.
It is an entirely different question whether a real implementation of
shedding is of any use in practice.
We have not answered this question.
Empirical studies, see e.g.~\cite{HDH02a}, indicate that in general a
large part of the data objects that are reachable at a program point are
actually not used beyond that point.
However, the static approximation of shedding proposed in~\cite{KSK07a}
might be more useful in practice.

In the definition of shedding supporting data linkage dynamics services,
belonging to $\shok$ corresponds to meeting the criterion for shedding.
The set $\shok$ is defined using the idea of mimicked shedding.
As a result, the description of the criterion for shedding looks to be
rather concrete.
It is an open question whether a more abstract description of the
criterion for shedding can be given.
If so, our concrete description should be correct with respect to that
abstract description.

In the case of shedding, the use of forecasting turns out to be
semantically feasible.
No restrictions are needed to preclude forecasting from introducing
something paradoxical.
This is certainly not always the case.
For example, in the case of the halting problem, the use of forecasting
is not semantically feasible, see e.g.~\cite{BP04a}, and in the case of
security hazard risk assessment, the use of forecasting requires certain
restrictions, see e.g.~\cite{BBP07a}.

\bibliographystyle{spmpsci}
\bibliography{TA}

\begin{thebibliography}{10}
\providecommand{\url}[1]{{#1}}
\providecommand{\urlprefix}{URL }
\expandafter\ifx\csname urlstyle\endcsname\relax
  \providecommand{\doi}[1]{DOI~\discretionary{}{}{}#1}\else
  \providecommand{\doi}{DOI~\discretionary{}{}{}\begingroup
  \urlstyle{rm}\Url}\fi

\bibitem{ADM98a}
Agesen, O., Detlefs, D., Moss, J.E.: Garbage collection and local variable
  type-precision and liveness in java virtual machines.
\newblock ACM SIGPLAN Notices \textbf{33}(5), 269--279 (1998)

\bibitem{App89}
Appel, A.W.: Runtime tags aren't necessary.
\newblock Lisp and Symbolic Computation \textbf{2}(2), 153--162 (1989)

\bibitem{BBKW89a}
Baeten, J.C.M., Bergstra, J.A., Klop, J.W., Weijland, W.P.: Term-rewriting
  systems with rule priorities.
\newblock Theoretical Computer Science \textbf{67}(2--3), 283--301 (1989)

\bibitem{Bak78a}
{Baker Jr.}, H.G.: List processing in real time on a serial computer.
\newblock Communications of the ACM \textbf{21}(4), 280--294 (1978)

\bibitem{BB03a}
Bergstra, J.A., Bethke, I.: Polarized process algebra and program equivalence.
\newblock In: J.C.M. Baeten, J.K. Lenstra, J.~Parrow, G.J. Woeginger (eds.)
  Proceedings 30th ICALP, \emph{Lecture Notes in Computer Science}, vol. 2719,
  pp. 1--21. Springer-Verlag (2003)

\bibitem{BBP07a}
Bergstra, J.A., Bethke, I., Ponse, A.: Thread algebra and risk assessment
  services.
\newblock In: C.~Dimitracopoulos, L.~Newelski, D.~Normann (eds.) Logic
  Colloquium 2005, pp. 1--17. Springer-Verlag (2007)

\bibitem{BL02a}
Bergstra, J.A., Loots, M.E.: Program algebra for sequential code.
\newblock Journal of Logic and Algebraic Programming \textbf{51}(2), 125--156
  (2002)

\bibitem{BM04c}
Bergstra, J.A., Middelburg, C.A.: Thread algebra for strategic interleaving.
\newblock Formal Aspects of Computing \textbf{19}(4), 445--474 (2007)

\bibitem{BM08d}
Bergstra, J.A., Middelburg, C.A.: Data linkage algebra, data linkage dynamics,
  and priority rewriting.
\newblock Electronic Report PRG0806, Programming Research Group, University of
  Amsterdam (2008).
\newblock Available from {\tt http://arxiv.org/}: {\tt arXiv:0804.4565v2
  [cs.LO]}

\bibitem{BP04a}
Bergstra, J.A., Ponse, A.: Execution architectures for program algebra.
\newblock Journal of Applied Logic \textbf{5}(1), 170--192 (2007)

\bibitem{BW88a}
Boehm, H.J., Weiser, M.: Garbage collection in an uncooperative environment.
\newblock Software Practice and Experience \textbf{18}(9), 807--820 (1988)

\bibitem{Col60a}
Collins, G.E.: A method for overlapping and erasure of lists.
\newblock Communications of the ACM \textbf{3}(12), 655--657 (1960)

\bibitem{DLMSS78a}
Dijkstra, E.W., Lamport, L., Martin, A.J., Scholten, C.S., Steffens, E.F.M.:
  On-the-fly garbage collection: An exercise in cooperation.
\newblock Communications of the ACM \textbf{21}(11), 966--975 (1978)

\bibitem{FY69a}
Fenichel, R.R., Yochelson, J.C.: A {LISP} garbage-collector for virtual-memory
  computer systems.
\newblock Communications of the ACM \textbf{12}(11), 611--612 (1969)

\bibitem{GHG60a}
Gelernter, H., Hansen, J.R., Gerberich, C.L.: A fortran-compiled
  list-processing language.
\newblock Journal of the ACM \textbf{7}(2), 87--101 (1960)

\bibitem{Gol91a}
Goldberg, B.: Tag-free garbage collection for strongly typed programming
  languages.
\newblock In: PLDI '91, pp. 165--176. ACM Press (1991)

\bibitem{HDH02a}
Hirzel, M., Diwan, A., Henkel, J.: On the usefulness of type and liveness
  accuracy for garbage collection and leak detection.
\newblock ACM Transactions on Programming Languages and Systems \textbf{24}(6),
  593--624 (2002)

\bibitem{KSK07a}
Khedker, U.P., Sanyal, A., Karkare, A.: Heap reference analysis using access
  graphs.
\newblock ACM Transactions on Programming Languages and Systems \textbf{30}(1),
  Article 1 (2007)

\bibitem{KS77a}
Kung, H.T., Song, S.W.: An efficient parallel garbage collection system and its
  correctness proof.
\newblock In: FOCS 1977, pp. 120--131. IEEE Computer Society Press (1977)

\bibitem{LH83a}
Lieberman, H., Hewitt, C.: A real-time garbage collector based on the lifetimes
  of objects.
\newblock Communications of the ACM \textbf{26}(6), 419--429 (1983)

\bibitem{McC60a}
McCarthy, J.: Recursive functions of symbolic expressions and their computation
  by machine.
\newblock Communications of the ACM \textbf{3}(4), 184--195 (1960)

\bibitem{Min63a}
Minsky, M.L.: A {LISP} garbage collector algorithm using serial secondary
  storage.
\newblock Memo 58 (Revised), Project MAC, Massachusetts Institute of Technology
  (1963)

\bibitem{ST99a}
Sannella, D., Tarlecki, A.: Algebraic preliminaries.
\newblock In: E.~Astesiano, H.J. Kreowski, B.~Krieg-Br{\"{u}}ckner (eds.)
  Algebraic Foundations of Systems Specification, pp. 13--30. Springer-Verlag,
  Berlin (1999)

\bibitem{SW67a}
Schorr, H., Waite, W.: An efficient machine-independent procedure for garbage
  collection in various list structures.
\newblock Communications of the ACM \textbf{10}(8), 501--506 (1967)

\bibitem{SKS00a}
Shaham, R., Kolodner, E.K., Sagiv, M.: On effectiveness of {GC} in {Java}.
\newblock In: ISMM '00, pp. 12--17. ACM Press (2000)

\bibitem{SKS01a}
Shaham, R., Kolodner, E.K., Sagiv, M.: Heap profiling for space-efficient
  {Java}.
\newblock ACM SIGPLAN Notices \textbf{36}(5), 104--113 (2001)

\bibitem{SYKS05a}
Shaham, R., Yahav, E., Kolodner, E.K., Sagiv, M.: Establishing local temporal
  heap safety properties with applications to compile-time memory management.
\newblock Science of Computer Programming \textbf{58}(1--2), 264--289 (2005)

\bibitem{Ste75a}
{Steele Jr.}, G.L.: Multiprocessing compactifying garbage collection.
\newblock Communications of the ACM \textbf{18}(9), 495--508 (1975)

\bibitem{Wir90a}
Wirsing, M.: Algebraic specification.
\newblock In: J.~van Leeuwen (ed.) Handbook of Theoretical Computer Science,
  vol.~B, pp. 675--788. Elsevier, Amsterdam (1990)

\end{thebibliography}

\end{document}